\newcommand{\myparskip}{3pt}
\newcommand{\headers}[3]{
\newpage\setcounter{page}{1}
\def\@oddhead{$\underline{\hbox to\textwidth{%
\textbf{\rlap{#1}\phantom{hj}\hfill #2 \hfill \llap{#3}}}}$}
\def\@oddfoot{\hfill\thepage\hfill}}
\newtheorem*{theorem*}{Theorem}
\newtheorem{lemma}{Lemma}[section]
\newtheorem{theorem}[lemma]{Theorem}
\newtheorem{prop}[lemma]{Proposition}
\newtheorem*{claim}{Claim}
\newtheorem{prob}{Problem}
\renewenvironment{proof}{\vspace{-0.1in}\noindent{\bf Proof:}}%
        {\hspace*{\fill}$\Box$\par}
        {\hspace*{\fill}$\Box$\par}
        {\hspace*{\fill}$\Box$\par}
\def\bar{\overline}
\def\script#1{\mathcal{#1}}
\def\mL{\script{L}}
\def\mE{\script{E}}
\newcommand{\sndp}{\text{SNDP}\xspace}
\newcommand{\ecsndp}{\text{EC-SNDP}\xspace}
\newcommand{\elemsndp}{\text{Elem-SNDP}\xspace}
\newcommand{\vcsndp}{\text{VC-SNDP}\xspace}
\newcommand{\hypersndp}{\text{Hypergraph-SNDP}\xspace}
\renewcommand{\a}{\alpha}
\renewcommand{\b}{\beta}
\newcommand{\dplus}{\ensuremath{d^+}}
\newcommand{\rmax}{r_{\max}}
\title{A note on the Survivable Network Design Problem}
\author{
Chandra Chekuri\thanks{Dept.\ of Computer Science, Univ.\ of Illinois at Urbana-Champaign, Urbana, IL, USA. {\tt chekuri@illinois.edu}. Supported in
part by NSF grants CCF-1016684, CCF-1319376 and CCF-1526799.}
\and
Thapanapong Rukkanchanunt\thanks{Chiang Mai University, Chiang Mai, Thailand.
{\tt thapanapong.r@cmu.ac.th}. The author was an undergraduate student at Univ.\ of Illinois when he worked on results of this paper.}
}
\begin{document}
\maketitle

\begin{abstract}
  In this note we consider the survivable network design problem
  (\sndp) in undirected graphs. We make two contributions.  The first
  is a new counting argument in the iterated rounding based
  $2$-approximation for edge-connectivity \sndp (\ecsndp) originally
  due to Jain~\cite{Jain01}. The second is to make some additional
  connections between hypergraphic version of \sndp (\hypersndp)
  introduced in \cite{ZhaoNI03} and edge and node-weighted versions of
  \ecsndp and element-connectivity \sndp (\elemsndp). One useful
  consequence of this connection is a $2$-approximation for \elemsndp
  that avoids the use of set-pair based relaxation and analysis.
\end{abstract}

\section{Introduction}
\label{sec:intro}
The {\em survivable network design problem} (\sndp) is a fundamental
problem in network design and has been instrumental in the development
of several algorithmic techniques. The input to \sndp is a graph
$G=(V,E)$ and an integer requirement $r(uv)$ between each unordered
pair of nodes $uv$.  The goal is to find a minimum-cost subgraph $H$
of $G$ such that for each pair $uv$, the connectivity in $H$ between
$u$ and $v$ is at least $r(uv)$. We use $\rmax$ to denote $\max_{uv}
r(uv)$, the maximum requirement.  We restrict attention to undirected
graphs in this paper.  There are several variants depending on whether
the costs are on edges or on nodes, and whether the connectivity
requirement is edge, element or node connectivity. Unless otherwise
specified we will assume that $G$ has edge-weights $c:E \rightarrow
\mathbb{R}_+$.  We refer to the three variants of interest based on
edge, element and vertex connectivity as \ecsndp, \elemsndp and
\vcsndp. All of them are NP-Hard and APX-hard to approximate even in
very special cases.

The seminal work of Jain \cite{Jain01} obtained a $2$-approximation
for \ecsndp via the technique of iterated rounding that was introduced
in the same paper.  A $2$-approximation for \elemsndp was obtained,
also via iterated rounding, in \cite{FleischerJW06,CheriyanVV06}. For
\vcsndp the current best approximation bound is $O(\rmax^3\log |V|)$
\cite{ChuzhoyK}; it is also known from hardness results in
\cite{ChakrabortyCK08} that the approximation bound for \vcsndp must
depend polynomially on $\rmax$ under standard hardness assumptions.

In this note we revisit the iterated rounding framework that yields a
$2$-approximation for \ecsndp and \elemsndp. The framework is based on
arguing that for a class of covering problems, a basic feasible
solution to an LP relaxation for the covering problem has a variable
of value at least $\frac{1}{2}$. This variable is then rounded up to
$1$ and the residual problem is solved inductively. A key fact needed
to make this iterative approach work is that the residual problem lies
in the same class of covering problems. This is ensured by working
with the class of skew-supermodular (also called weakly-supermodular)
requirement functions which capture \ecsndp as a special case. The
proof of existence of an edge with large value in a basic feasible
solution for this class of requirement functions has two components.
The first is to establish that a basic feasible solution is
characterized by a laminar family of sets in the case of \ecsndp (and
set pairs in the case of \elemsndp). The second is a counting argument
that uses this characterization to obtain a contradiction if no
variable is at least $\frac{1}{2}$. The counting argument of Jain
\cite{Jain01} has been simplified and streamlined in subsequent work
via fractional token arguments \cite{BansalKN09,NagarajanRS10}.  These
arguments have been applied for several related problems for which
iterated rounding has been shown to be a powerful technique; see
\cite{LauRS-book}.  The fractional token argument leads to short and
slick proofs. At the same time we feel that it is hard to see the
intuition behind the argument. Partly motivated by pedgogical
reasons, in this note, we revisit the counting argument for \ecsndp
and provide a different counting argument along with a longer
explanation. The goal is to give a more combinatorial flavor to the
argument. We give this argument in Section~\ref{sec:counting}.

The second part of the note is on \elemsndp. A $2$-approximation for
this problem has been derived by generalizing the iterated rounding
framework to a set-pair based relaxation
\cite{FleischerJW06,CheriyanVV06}. The set-pair based relaxation and
arguments add substantial notation to the proofs although one can see
that there are strong similarities to the basic argument used in
\ecsndp. The notational overhead limits the ability to teach and
understand the proof for \elemsndp. Interestingly, in a little noticed
paper, Zhao, Nagamochi and Ibaraki \cite{ZhaoNI03} defined a
generalization of \ecsndp to hypergraphs which we refer to as
\hypersndp. They observed that \elemsndp can be easily reduced to
\hypersndp in which the only non-zero weight hyperedges are of size
$2$ (regular edges in a graph). The advantage of this reduction is
that one can derive a $2$-approximation for \elemsndp by essentially
appealing to the same argument as for \ecsndp with a few minor
details. We believe that this is a useful perspective.  Second, there
is a simple and well-known connection between node-weighted network
design in graphs and network design problems on hypergraphs. We
explicitly point these connections which allows us to derive some
results for \hypersndp.  Section~\ref{sec:hypergraph} describes these
connections and results.

This note assumes that the reader has some basic familiarity with
previous literature on \sndp and iterated rounding.

\section{Iterated rounding for \ecsndp}
\label{sec:counting}

The $2$-approximation for \ecsndp is based on casting it as a special
case of covering a skew-supermodular requirement function by a graph.
We set up the background now. Given a finite ground set $V$ 
an integer valued set function $f: 2^V \rightarrow \mathbb{Z}$ is
skew-supermodular if for all $A, B \subseteq V$ 
one of the following holds:
\begin{eqnarray*}
  f(A) + f(B) & \le &f(A \cap B) + f(A \cup B) \\
f(A) + f(B) & \le & f(A - B) + f(B-A)
\end{eqnarray*}
Given an edge-weighted graph $G=(V,E)$ and a skew-supermodular
requirement function $f: 2^V \rightarrow \mathbb{Z}$, we can consider
the problem of finding the minimum-cost subgraph $H=(V,F)$ of $G$ such
that $H$ covers $f$; that is, for all $S \subseteq V$, $|\delta_F(S)|
\ge f(S)$.  Here $\delta_F(S)$ is the set of all edges in $F$ with one
endpoint in $S$ and the other outside.  Given an instance of \ecsndp
with input graph $G=(V,E)$ and edge-connectivity requirements $r(uv)$
for each pair $uv$, we can model it by setting $f(S) = \max_{u \in S,
  v \not \in S} r(uv)$. It can be verified that $f$ is
skew-supermodular. The important aspect of skew-supermodular functions
that make them well-suited for the iterated rounding approach is the
following.

\begin{lemma} [\cite{Jain01}]
  \label{lem:skew-minus-cut}
  Let $G=(V,E)$ be a graph and $f: 2^V \rightarrow \mathbb{Z}$ be a
  skew-supermodular requirement function, and $F \subseteq E$ be a
  subset of edges. The residual requirement function $g:2^V
  \rightarrow \mathbb{Z}$ defined by $g(S) = f(S) - |\delta_F(S)|$ for
  each $S \subseteq V$ is also skew-supermodular.
\end{lemma}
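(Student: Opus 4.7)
The plan is to reduce the claim to the well-known fact that the cut function of an undirected graph is simultaneously submodular and posimodular, and then match cases against the skew-supermodular hypothesis on $f$.

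First I would recall (and briefly verify by edge contribution) that for any edge set $F$ and any $A, B \subseteq V$, the function $d_F(S) := |\delta_F(S)|$ satisfies both
\begin{eqnarray*}
  d_F(A) + d_F(B) & \ge & d_F(A \cap B) + d_F(A \cup B), \\
  d_F(A) + d_F(B) & \ge & d_F(A - B) + d_F(B - A).
\end{eqnarray*}
The verification is the standard one: partition the endpoints of each edge $e \in F$ into the four regions $A \cap B$, $A - B$, $B - A$, $V - (A \cup B)$, and check in each of the sixteen cases that $e$ contributes at least as much to the left-hand side as to the right-hand side of each inequality. This is the only mildly tedious step, and it is the potential obstacle if one wants a fully self-contained write-up; but since we are working in undirected graphs both inequalities hold without any further assumption on $F$.

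Given this, the proof of the lemma is immediate by case analysis on which of the two skew-supermodular inequalities is guaranteed for $f$ on the chosen pair $A, B$. In the first case we have $f(A) + f(B) \le f(A \cap B) + f(A \cup B)$, and combining with the submodular inequality for $d_F$ (which, rewritten with a minus sign, goes the correct way) yields
\[
  g(A) + g(B) \le g(A \cap B) + g(A \cup B).
\]
In the second case $f(A) + f(B) \le f(A - B) + f(B - A)$, and combining with the posimodular inequality for $d_F$ gives
\[
  g(A) + g(B) \le g(A - B) + g(B - A).
\]
In either case one of the two defining inequalities for skew-supermodularity holds for $g$ on the pair $(A,B)$, which is exactly what we need. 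No obstacle beyond the edge-case bookkeeping for $d_F$ remains; the heart of the argument is simply that subtracting a function which satisfies \emph{both} inequalities preserves a property that only requires \emph{one} of them.
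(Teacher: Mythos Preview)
Your proposal is correct and matches the paper's approach essentially line for line: the paper phrases it slightly more abstractly (proving that $f-h$ is skew-supermodular for any symmetric submodular $h$, using that symmetric submodular functions are posi-modular), but the actual argument is precisely your case split combining the two inequalities for the cut function with whichever skew-supermodular inequality $f$ satisfies. The only cosmetic difference is that the paper leaves the final ``easy to check'' step implicit, whereas you write out the two cases explicitly.
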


Although the proof is standard by now we will state it in a more
general way.

\begin{lemma}
  \label{lem:skew-minus-submod}
  Let $f: 2^V \rightarrow \mathbb{Z}$ be a skew-supermodular
  requirement function and let $h : 2^V \rightarrow \mathbb{Z}_+$ be a
  symmetric submodular function. Then $g = f - h$ is a
  skew-supermodular function.
\end{lemma}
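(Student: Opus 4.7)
The plan is to split on which of the two skew-supermodular inequalities is witnessed by $f$ on the given pair $A, B$, and in each case subtract a matching inequality for $h$. The first case is immediate from submodularity of $h$; for the second case I will need the analogue of submodularity for the ``differences'' $A-B, B-A$, namely posimodularity, which I will derive from symmetry plus submodularity.

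Concretely, first I would establish the auxiliary fact that any symmetric submodular $h : 2^V \to \mathbb{Z}_+$ is posimodular, i.e.\ $h(A) + h(B) \ge h(A-B) + h(B-A)$ for all $A, B \subseteq V$. The derivation is a one-liner: apply submodularity to the pair $A$ and $V \setminus B$ to get $h(A) + h(V \setminus B) \ge h(A \setminus B) + h(A \cup (V \setminus B))$, and then use $h(V \setminus B) = h(B)$ and $h(A \cup (V \setminus B)) = h(V \setminus (B \setminus A)) = h(B \setminus A)$.

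Now fix any $A, B \subseteq V$. By skew-supermodularity of $f$, at least one of the following holds. If $f(A) + f(B) \le f(A \cap B) + f(A \cup B)$, combine this with the submodular inequality $h(A) + h(B) \ge h(A \cap B) + h(A \cup B)$ (so $-h(A)-h(B) \le -h(A\cap B) - h(A\cup B)$) to conclude
\[
g(A) + g(B) \le g(A \cap B) + g(A \cup B).
\]
Otherwise $f(A) + f(B) \le f(A-B) + f(B-A)$, and we combine this with the posimodular inequality just established for $h$ to get
\[
g(A) + g(B) \le g(A-B) + g(B-A).
\]
In either case one of the two skew-supermodular inequalities holds for $g$, which is exactly what needs to be proved.

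The only non-routine step is the observation that symmetric submodular functions are posimodular; once that is in hand, the rest is bookkeeping. Lemma~\ref{lem:skew-minus-cut} is then recovered as a special case by taking $h(S) = |\delta_F(S)|$, which is well known to be symmetric and submodular (hence also posimodular).
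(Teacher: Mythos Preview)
Your proof is correct and follows exactly the same approach as the paper: use submodularity of $h$ for the first skew-supermodular alternative and posimodularity (derived from symmetry plus submodularity) for the second, then subtract. The only difference is that you spell out the one-line derivation of posimodularity, which the paper merely asserts.
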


\begin{proof}
  Since $h$ is submodular we have that for all $A, B \subseteq V$,
  $$ h(A) + h(B) \ge h(A \cup B) + h(A \cap B).$$
  Since $h$ is also symmetric it is posi-modular which means that for all
  $A, B \subseteq V$,
  $$ h(A) + h(B) \ge h(A - B) + h(B - A).$$
  Note that $h$ satisfies both properties for each $A,B$. It is now easy
  to check that $f-h$ is skew-supermodular. 
\end{proof}

Lemma~\ref{lem:skew-minus-cut} follows from
Lemma~\ref{lem:skew-minus-submod} by noting that the cut-capacity
function $|\delta_F|: 2^V \rightarrow \mathbb{Z}_+$ is submodular and
symmetric in undirected graphs. We also note that the same property holds
for the more general setting when $G$ is a hypergraph.

The standard LP relaxation for covering a function by a graph is described
below where there is variable $x_e \in [0,1]$ for each edge $e \in E$.
\begin{eqnarray*}
  \min \sum_{e \in E} c_e x_e &&\\
  \sum_{e \in \delta(S)} x_e & \ge & f(S) \quad \quad S \subset V\\
  x_e & \in & [0,1] \quad \quad e \in E
\end{eqnarray*}

The technical theorem that underlies the $2$-approximation for \ecsndp
is the following.
\begin{theorem} [\cite{Jain01}] 
\label{thm:main-ecsndp}
Let $f$ be a non-trivial\footnote{We use the term non-trivial to
  indicate that there is at least one set $S \subset V$ such that
  $f(S) > 0$.} skew-supermodular function. In any basic feasible
solution $\bar{x}$ to the LP relaxation of covering $f$ by a graph $G$
there is an edge $e$ such that $\bar{x}_e \ge \frac{1}{2}$.
\end{theorem}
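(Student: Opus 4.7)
The plan is a proof by contradiction. Suppose every support edge $e$ satisfies $0 < \bar x_e < \tfrac12$. Edges with $\bar x_e = 0$ are discarded, and any edge with $\bar x_e = 1$ can be removed by passing to the residual requirement, which remains skew-supermodular by Lemma~\ref{lem:skew-minus-cut}. Let $m$ denote the number of remaining support edges. The goal is to exhibit a laminar family $\mathcal L$ of tight sets of size exactly $m$ via the basic-feasibility structure, and then show by a combinatorial counting argument that this family must in fact have fewer than $m$ elements, a contradiction.

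The first step is the standard uncrossing argument. Because $\bar x$ is basic, there is a collection of tight sets $S \subset V$ (satisfying $\sum_{e \in \delta(S)}\bar x_e = f(S)$) whose characteristic cut-vectors $\chi(\delta(S))$ are linearly independent and span a space of dimension $m$. Whenever two tight sets $S, T$ cross, skew-supermodularity of $f$ together with submodularity and posi-modularity of $|\delta(\cdot)|$ (as used in the proof of Lemma~\ref{lem:skew-minus-submod}) forces one of the inequalities to hold with equality; moreover in the equality case $\chi(\delta(S)) + \chi(\delta(T))$ equals either $\chi(\delta(S\cap T)) + \chi(\delta(S\cup T))$ or $\chi(\delta(S-T)) + \chi(\delta(T-S))$. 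Hence crossing pairs can be replaced by non-crossing pairs while preserving the span, and a standard induction produces a laminar family $\mathcal L$ of tight sets with $|\mathcal L| = m$. We may assume $f(S) \ge 1$ for all $S \in \mathcal L$, since constraints with $f(S)=0$ are trivial and can be dropped.

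The second step uses the contradiction hypothesis to lower-bound the degree of each laminar set: for $S \in \mathcal L$, since $\sum_{e \in \delta(S)} \bar x_e = f(S) \ge 1$ and every $\bar x_e < \tfrac12$, we must have $|\delta(S)| \ge 3$. The third step is a token-counting argument on the forest whose nodes are the sets of $\mathcal L$ and whose parent relation is laminar containment. Assign each support edge two tokens, one per endpoint, and push each token to the smallest set of $\mathcal L$ that contains its endpoint. The plan is to charge two tokens to every $S \in \mathcal L$ (paying for $S$), while arguing that the slack coming from $|\delta(S)| \ge 3$ guarantees at least one unassigned surplus token in the entire forest. This yields $2m \ge 2|\mathcal L| + 1$, so $|\mathcal L| < m$, contradicting $|\mathcal L| = m$ from the first step.

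The main obstacle is the token distribution in the third step. Edges can lie in cuts of a set, its children, and more distant ancestors, so one must set up the bookkeeping so that (i) no set is starved despite this sharing, and (ii) the excess beyond the minimum degree of $2$ forced by a tight constraint genuinely produces a surplus somewhere in the forest rather than being absorbed into nearby sets. A clean way to enforce (ii) is to examine the "deficiency" a set could have relative to its children and show that any laminar family meeting our constraints has a leaf or internal node where the $|\delta(S)| \ge 3$ bound strictly exceeds what the two-token budget of $S$ can consume. Once this rule is in place the inequality is immediate; the author promises a combinatorial version of exactly this accounting in the next section.
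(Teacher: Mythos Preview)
Your setup (contradiction hypothesis, discarding integral coordinates, uncrossing to a laminar family $\mathcal L$ with $|\mathcal L|=m$) matches the paper. But the counting step you sketch is both a different approach from the paper's and, as written, incomplete.

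The paper does \emph{not} run a token-redistribution scheme. Its contribution is a new invariant: for every $S\in\mathcal L$ one proves, by induction up the laminar forest, that $f(S)\ge\alpha(S)-\beta(S)$, where $\alpha(S)$ is the number of sets of $\mathcal L$ contained in $S$ and $\beta(S)$ is the number of edges with both endpoints inside $S$. Summed over the roots $R_1,\dots,R_h$ this gives $\sum_i f(R_i)\ge m-\sum_i\beta(R_i)=|E'|$, where $E'$ is the set of edges crossing some root, and one finishes with the trivial disjoint-sets case. The inductive step classifies edges at $S$ into the types $E_{cc},E_{cp},E_{po},E_{co}$ and uses integrality of $\sum_i f(C_i)-f(S)$ together with $\bar x_e<\tfrac12$. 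There are no tokens anywhere.

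Your plan is instead the classical two-tokens-per-edge argument. That route can be made to work (it is essentially Jain's original proof, and the fractional-token refinements you allude to are in \cite{BansalKN09,NagarajanRS10}), but you have not actually carried it out: you flag the redistribution as ``the main obstacle'' and then defer it. Note in particular that the bound $|\delta(S)|\ge 3$ you derive only yields $\bar x_e\ge\tfrac13$, as the paper explicitly observes; reaching $\tfrac12$ requires a careful treatment of internal nodes that receive no endpoint directly and of how an edge in $\delta(C_i)\cap\delta(S)$ is shared between a child and its parent. That accounting is the entire content of the theorem, so a proposal that ends with ``once this rule is in place the inequality is immediate'' has not yet supplied a proof.
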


To prove the preceding theorem it suffices to focus on basic feasible
solutions $\bar{x}$ that are fully fractional; that is, $\bar{x}_e \in
(0,1)$ for all $e$. For a set of edges $F \subseteq E$ let $\chi(F)
\in \{0,1\}^{|E|}$ denote the characteristic vector of $F$; that is, a
$|E|$-dimensional vector that has a $1$ in each position corresponding
to an edge $e \in F$ and a $0$ in all other positions.
Theorem~\ref{thm:main-ecsndp} is built upon the following
characterization of basic feasible solutions and is shown via
uncrossing arguments.

\begin{lemma}[\cite{Jain01}]
  \label{lem:laminar}
  Let $\bar{x}$ be a fully-fractional basic feasible solution to the
  the LP relaxation. Then there is a laminar family of vertex subsets
  $\mL$ such that $\bar{x}$ is the unique solution to the system of
  equalities 
    $$x(\delta(S)) = f(S) \quad \quad S \in \mL.$$
    In particular this also implies that $|\mL| = |E|$ and
    that the vectors $\chi(\delta(S))$, $S \in \mL$ are 
    linearly independent.
\end{lemma}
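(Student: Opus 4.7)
The plan is to follow the classical uncrossing paradigm: identify the family of all tight cut constraints, prove that any two crossing tight sets can be replaced by an uncrossed pair whose cut vectors preserve the linear span, and then extract a maximal laminar subfamily of tight constraints as the desired $\mL$. Let $\mF = \{S \subset V : \bar{x}(\delta(S)) = f(S)\}$ be the family of tight cut constraints. Since $\bar{x}$ is fully fractional, no box constraint $x_e \in [0,1]$ is tight, so $\bar{x}$ being a basic feasible solution means that $\{\chi(\delta(S)) : S \in \mF\}$ must span all of $\mathbb{R}^{|E|}$. Hence a laminar subfamily $\mL \subseteq \mF$ whose cut vectors also span $\mathbb{R}^{|E|}$ contains a linearly independent subcollection of size exactly $|E|$, which is still laminar, and then $\bar{x}$ is the unique solution of the equalities $x(\delta(S)) = f(S)$, $S \in \mL$.

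The first key step is a vector-level uncrossing identity. Let $A,B \in \mF$ cross. Chaining tightness of $A$ and $B$, submodularity of $|\delta(\cdot)|$, feasibility of $\bar{x}$, and skew-supermodularity of $f$, one of the two skew-supermodular branches forces a string of inequalities to collapse to equalities. In the intersection/union branch this yields $\bar{x}(\delta(A\cap B)) = f(A\cap B)$, $\bar{x}(\delta(A\cup B)) = f(A\cup B)$, and $\bar{x}(\delta(A)) + \bar{x}(\delta(B)) = \bar{x}(\delta(A\cap B)) + \bar{x}(\delta(A\cup B))$. The cut-capacity submodular inequality holds edge by edge, and by full fractionality $\bar{x}_e > 0$ for every $e$, so the weighted scalar equality above forces per-edge equality, giving the vector identity
\[
\chi(\delta(A)) + \chi(\delta(B)) \;=\; \chi(\delta(A\cap B)) + \chi(\delta(A\cup B)),
\]
and an analogous conclusion holds in the difference branch. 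Thus any crossing pair of tight sets can be uncrossed into another tight pair with the same cut-vector sum.

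With this tool I choose a laminar family $\mL \subseteq \mF$ with linearly independent cut vectors that is maximal in size. Suppose for contradiction that $|\mL| < |E|$. Then the cut vectors of $\mL$ span a proper subspace of $\mathbb{R}^{|E|}$, so there is some $T \in \mF$ with $\chi(\delta(T))$ outside $\mathrm{span}(\mL)$; pick such a $T$ that crosses the fewest possible members of $\mL$. If $T$ crosses no set in $\mL$, then $\mL \cup \{T\}$ is still laminar and linearly independent, contradicting maximality. Otherwise $T$ crosses some $S \in \mL$, and applying the uncrossing identity to $\{T,S\}$ produces tight sets $T_1, T_2$ with $\chi(\delta(T_1)) + \chi(\delta(T_2)) = \chi(\delta(T)) + \chi(\delta(S))$. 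Because $\chi(\delta(S)) \in \mathrm{span}(\mL)$ while $\chi(\delta(T)) \notin \mathrm{span}(\mL)$, at least one $T' \in \{T_1,T_2\}$ must satisfy $\chi(\delta(T')) \notin \mathrm{span}(\mL)$.

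The main obstacle, and the combinatorial heart of the argument, is to show that $T'$ crosses strictly fewer members of $\mL$ than $T$ does, which will contradict the minimality in the choice of $T$ and finish the proof. Two observations suffice: (i) $T'$ does not cross $S$, because $T'$ is either contained in $S$ or contains $S$; and (ii) whenever $T'$ crosses some $S' \in \mL \setminus \{S\}$, the set $T$ already crosses $S'$. Point (ii) is a short but load-bearing Venn-diagram case analysis: laminarity of $\mL$ forces $S$ and $S'$ to be disjoint or nested, and in each such configuration one checks that a crossing between $S'$ and any of $S \cap T$, $S \cup T$, $S - T$, $T - S$ propagates to a crossing between $S'$ and $T$. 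This case analysis is the part I would spell out most carefully; everything else in the plan is bookkeeping on top of the single uncrossing identity.
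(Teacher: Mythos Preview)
The paper does not actually prove this lemma; it is stated with attribution to Jain and the remark that it ``is shown via uncrossing arguments,'' and is then used as a black box for the counting argument that follows. Your proposal is precisely the standard uncrossing proof the paper alludes to, and it is correct as written: the per-edge submodular/posimodular inequality together with $\bar{x}_e>0$ does force the vector identity, and the minimal-crossing argument with the case analysis in your point (ii) is the usual way to finish. So there is nothing to compare against, but your plan is sound and matches what the citation points to.
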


The second part of the proof of Theorem~\ref{thm:main-ecsndp} 
is a counting argument that relies on the characterization in
Lemma~\ref{lem:laminar}. The rest of this section describes
a counting argument which we believe is slightly different from the
previous ones in terms of the main invariant. The goal is to derive
it organically from simpler cases.

With every laminar family we can associate a rooted forest.  We use
terminology for rooted forests such as leaves and roots as well as set
terminology.  We refer to a set $C \in \mL$ as a child of a set $S$ if
$C \subset S$ and there is no $S' \in \mL$ such that $C \subset S'
\subset S$; If $C$ is the child of $S$ then $S$ is the parent of $C$.
Maximal sets of $\mL$ correspond to the roots of the forest associated
with $\mL$.

\subsection{Counting Argument}
The proof is via contradiction where we assume that
$0 < \bar{x}_e < \frac{1}{2}$ for each $e \in E$.
We call the two nodes incident to an edge as the endpoints of the
edges. We say that an endpoint $u$ {\em belongs} to a set $S \in \mL$
if $u$ is the minimal set from $\mL$ that contains $u$. 

We consider the simplest setting where $\mL$ is a collection of
disjoint sets, in other words, all sets are maximal. In this case the
counting argument is easy. Let $m = |E| = |\mL|$.  For each $S \in
\mL$, $f(S) \ge 1$ and $\bar{x}(\delta(S)) = f(S)$.  If we assume that
$\bar{x}_e < \frac{1}{2}$ for each $e$, we have $|\delta(S)| \ge 3$
which implies that each $S$ contains at least $3$ distinct endpoints. 
Thus, the $m$ disjoint sets require a total of $3m$ endpoints.  However the
total number of endpoints is at most $2m$ since there are $m$ edges,
leading to a contradiction.

Now we consider a second setting where the forest associated with
$\mL$ has $k$ leaves and $h$ internal nodes but each internal node
has at least two children. In this case, following Jain, we can easily
prove a weaker statement that $\bar{x}_e \ge 1/3$ for some edge $e$.
If not, then each leaf set $S$ must have four edges leaving it and hence
the total number of endpoints must be at least $4k$. However, if
each internal node has at least two children, we have $h < k$ and
since $h+k = m$ we have $k > m/2$. This implies that there must
be at least $4k > 2m$ endpoints since the leaf sets are disjoint.
But $m$ edges can have at most $2m$ endpoints.
Our assumption on each internal node having at least two children
is obviously a restriction. So far we have not used the fact that
the vectors $\chi(\delta(S)), S \in \mL$ are linearly independent. We can 
handle the general case to prove $\bar{x}_e \ge 1/3$ by using the
following lemma.

\begin{lemma}[\cite{Jain01}]
  Suppose $C$ is a unique child of $S$. Then there must be at least 
  two endpoints in $S$ that belong to $S$.
\end{lemma}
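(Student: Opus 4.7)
The plan is to argue by contradiction, combining the linear independence from Lemma~\ref{lem:laminar} with the fully-fractional assumption $\bar{x}_e \in (0,1)$. Since $C$ is the unique child of $S$, a node has $S$ as its minimal containing set in $\mL$ iff it lies in $S \setminus C$. Hence endpoints belonging to $S$ are exactly the edge-endpoint incidences at nodes in $S \setminus C$.

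The key technical step is to analyze the vector $w := \chi(\delta(S)) - \chi(\delta(C))$. Running through the six possible placements of an edge's two endpoints among $C$, $S \setminus C$, and $V \setminus S$, one checks that only two edge types contribute: edges between $C$ and $S \setminus C$ (entry $-1$, as they lie in $\delta(C) \setminus \delta(S)$) and edges between $S \setminus C$ and $V \setminus S$ (entry $+1$, as they lie in $\delta(S) \setminus \delta(C)$). All four remaining types cancel. Crucially, each edge in the support of $w$ contributes exactly one endpoint to $S \setminus C$, while an edge with both endpoints in $S \setminus C$ (which would contribute two) lies outside both cuts and thus outside the support.

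Now suppose for contradiction that fewer than two endpoints belong to $S$. If zero endpoints belong to $S$, no edge has any endpoint in $S \setminus C$, so $w = 0$, i.e., $\chi(\delta(S)) = \chi(\delta(C))$, directly violating the linear independence of the vectors $\chi(\delta(T))$ for $T \in \mL$. If exactly one endpoint belongs to $S$, lying on some edge $e$, then $e$ must have only one endpoint in $S \setminus C$ (otherwise we would count two such endpoints), so $w = \pm \chi(e)$. Taking the inner product with $\bar{x}$ and using tightness of the constraints for $S$ and $C$ guaranteed by Lemma~\ref{lem:laminar}, we get
\[
\pm \bar{x}_e \;=\; \bar{x}(\delta(S)) - \bar{x}(\delta(C)) \;=\; f(S) - f(C) \;\in\; \mathbb{Z}.
\]
Since $\bar{x}_e \in (0,1)$, this is a contradiction.

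The main obstacle is the edge-type case analysis computing the support of $w$; once that is done cleanly, the two sub-cases fall out, with the subtlety being that they invoke different tools: the zero-endpoint case needs the linear independence of the tight cuts, while the one-endpoint case needs integrality of $f$ together with fractionality of $\bar{x}$.
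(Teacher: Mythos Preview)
Your proof is correct and follows essentially the same approach as the paper's: both split into the zero-endpoint case (where $\chi(\delta(S)) = \chi(\delta(C))$ contradicts linear independence) and the one-endpoint case (where $\bar{x}(\delta(S)) - \bar{x}(\delta(C)) = \pm \bar{x}_e$ contradicts integrality of $f(S)-f(C)$ against $\bar{x}_e \in (0,1)$). Your framing via the difference vector $w = \chi(\delta(S)) - \chi(\delta(C))$ and the explicit edge-type analysis is a bit more detailed than the paper's, but the argument is the same.
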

\begin{proof}
  If there is no endpoint that belongs to $S$ then $\delta(S) =
  \delta(C)$ but then $\chi(\delta(S))$ and $\chi(\delta(C))$ are
  linearly dependent.  Suppose there is exactly one endpoint that
  belongs to $S$ and let it be the endpoint of edge $e$. But then
  $\bar{x}(\delta(S)) = \bar{x}(\delta(C)) + \bar{x}_e$ or
  $\bar{x}(\delta(S)) = \bar{x}(\delta(C)) - \bar{x}_e$.  Both cases
  are not possible because $\bar{x}(\delta(S)) = f(S)$ and
  $\bar{x}(\delta(C)) = f(C)$ where $f(S)$ and $f(C)$ are positive
  integers while $\bar{x}_e \in (0,1)$. Thus there are at least two
  end points that belong to $S$.
\end{proof}

Using the preceding lemma we prove that $\bar{x}_e \ge 1/3$ for some
edge $e$.  Let $k$ be the number of leaves in $\mL$ and $h$ be the
number of internal nodes with at least two children and let $\ell$ be
the number of internal nodes with exactly one child. We again have $h
< k$ and we also have $k + h + \ell = m$. Each leaf has at least four
endpoints. Each internal node with exactly one child has at least two
end points which means the total number of endpoints is at least $4k +
2\ell$. But $4k + 2\ell = 2k + 2k + 2\ell > 2k + 2h + 2\ell > 2m$ and
there are only $2m$ endpoints for $m$ edges. In other words, we can
ignore the internal nodes with exactly one child since there are two
endpoints in such a node/set and we can effectively charge one edge to
such a node.

We now come to the more delicate argument to prove the tight bound
that $\bar{x}_e \ge \frac{1}{2}$ for some edge $e$. Our main contribution is
to show an invariant that effectively reduces the argument to the case
where we can assume that $\mL$ is a collection of leaves. This is
encapsulated in the claim below which requires some notation. Let $\a(S)$ be
the number of sets of $\mL$ contained in $S$ including $S$ itself. Let
$\b(S)$ be the number of edges whose {\em both} endpoints lie inside
$S$.  Recall that $f(S)$ is the requirement of $S$.

\begin{claim}
  For all $S \in \mL$, $f(S) \geq \a(S) - \b(S)$.
\end{claim}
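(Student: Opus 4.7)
The plan is to prove the claim by induction on the laminar forest associated with $\mL$, working from the leaves upward. For the base case, when $S$ is a leaf of $\mL$, $\a(S)=1$; since $\bar{x}$ is a fully fractional basic feasible solution, $f(S)=\bar{x}(\delta(S))$ is a positive integer and hence $f(S)\ge 1\ge 1-\b(S)$, as required.

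For the inductive step, let $C_1,\ldots,C_p$ denote the children of $S$ in $\mL$, and assume the claim for each child. Set $\gamma(S):=\b(S)-\sum_i\b(C_i)$; this counts the edges contained in $S$ but not entirely inside any single child. The key intermediate inequality I aim to establish is
\[
f(S)+\gamma(S)-\sum_{i=1}^{p}f(C_i)\ \ge\ 1.
\]
Given this, the claim for $S$ follows easily: by the inductive hypothesis $\sum_i f(C_i)\ge\sum_i(\a(C_i)-\b(C_i))=(\a(S)-1)-\sum_i\b(C_i)$, and substituting and rearranging yields $f(S)\ge\a(S)-\b(S)$.

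To prove the displayed inequality, I partition the edges incident to $S$ according to how their endpoints sit relative to the children: let $B$ be the edges of $\delta(S)$ whose $S$-endpoint lies in $S\setminus\bigcup_i C_i$; let $E_2$ be the edges inside $S$ joining two distinct children; let $E_3$ be the edges inside $S$ with one endpoint in some child and the other in $S\setminus\bigcup_i C_i$; and let $E_4$ be the edges inside $S$ with both endpoints in $S\setminus\bigcup_i C_i$. A routine check of how each edge type contributes to $\chi(\delta(S))$ versus $\sum_i\chi(\delta(C_i))$ gives the vector identity $\chi(\delta(S))-\sum_i\chi(\delta(C_i))=\chi(B)-2\chi(E_2)-\chi(E_3)$. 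Pairing with $\bar{x}$ and adding $\gamma(S)=|E_2|+|E_3|+|E_4|$ yields
\[
f(S)+\gamma(S)-\sum_i f(C_i)=\bar{x}(B)+\sum_{e\in E_2}(1-2\bar{x}_e)+\sum_{e\in E_3}(1-\bar{x}_e)+|E_4|.
\]
The left-hand side is an integer, and with $\bar{x}_e\in(0,1/2)$ each summand on the right is non-negative. If $B,E_2,E_3,E_4$ were all empty then $\chi(\delta(S))=\sum_i\chi(\delta(C_i))$, contradicting the linear independence guaranteed by Lemma~\ref{lem:laminar}; hence at least one summand is strictly positive, the expression is a strictly positive integer, and is therefore at least $1$.

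The main obstacle I anticipate is the edge-classification bookkeeping: making sure every edge contributes correctly to the vector identity, with the right sign and multiplicity, and that the edges I have not named (those in $\delta(S)$ with $S$-endpoint in some child, and those inside a single child) indeed cancel on both sides. A pleasant feature of this framework is that it subsumes the unique-child case ($p=1$): there $E_2=\emptyset$ and the identity becomes $\chi(\delta(S))-\chi(\delta(C_1))=\chi(B)-\chi(E_3)$, so Jain's earlier unique-child lemma emerges as a direct consequence of this linear-independence step rather than as a separate ingredient.
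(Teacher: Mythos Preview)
Your proof is correct and follows essentially the same route as the paper: induction on the laminar forest, reduction to the key inequality $f(S)+\gamma(S)-\sum_i f(C_i)\ge 1$ via the same edge classification and vector identity, and the conclusion via linear independence plus integrality under the hypothesis $\bar{x}_e\in(0,\tfrac12)$. Your single-case packaging---moving $\gamma(S)$ to the left so that every summand on the right is non-negative---is a slightly cleaner version of the paper's two-case split on whether $\gamma(S)=0$, and your explicit inclusion of the $E_4$ edges is harmless (such edges in fact cannot occur in a fully fractional basic feasible solution, which is why the paper silently omits them, but your argument does not need this observation).
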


Assuming that the claim is true we can do an easy counting argument.
Let $R_1,R_2,\ldots,R_h$ be the maximal sets in $\mL$ (the roots of the
forest). Note that $\sum_{i=1}^h \a(R_i) = |\mL| = m$.
Applying the claim to each $R_i$ and summing up,
$$\sum_{i=1}^h f(R_i) \ge \sum_{i=1}^h \a(R_i) - \sum_{i=1}^h \b(R_i) \ge m -  \sum_{i=1}^h\b(R_i).$$

Note that $\sum_{i=1}^h f(R_i)$ is the total requirement of the 
maximal sets. And $m -  \sum_{i=1}^h\b(R_i)$ is the total number of
edges that cross the sets $R_1,\ldots,R_h$. Let $E'$ be the
set of edges crossing these maximal sets.  Now we are back to 
the setting with $h$ disjoint sets and $E'$ edges
with $\sum_{i=1}^h f(R_i) \ge |E'|$. This easily leads to a contradiction
as before if we assume that $\bar{x}_e < \frac{1}{2}$ for all $e \in E'$.
Formally, each set $R_i$ requires $> 2f(R_i)$ edges crossing it from $E'$
and therefore $R_i$ contains at least $2f(R_i)+1$ endpoints of edges
from $E'$. Since
$R_1,\ldots,R_h$ are disjoint the total number of endpoints is
at least $2\sum_i f(R_i) + h$ which is strictly more than $2|E'|$.

Thus, it remains to prove the claim which we do by inductively starting
at the leaves of the forest for $\mL$. 

\medskip
\noindent
{\bf Case 1:} $S$ is a leaf node. 
We have $f(S) \geq 1$ while $\a(S) = 1$ and $\b(S) = 0$ which verifies the
claim.

\begin{figure}[tb]
  \centering
  \includegraphics[width=3.5in]{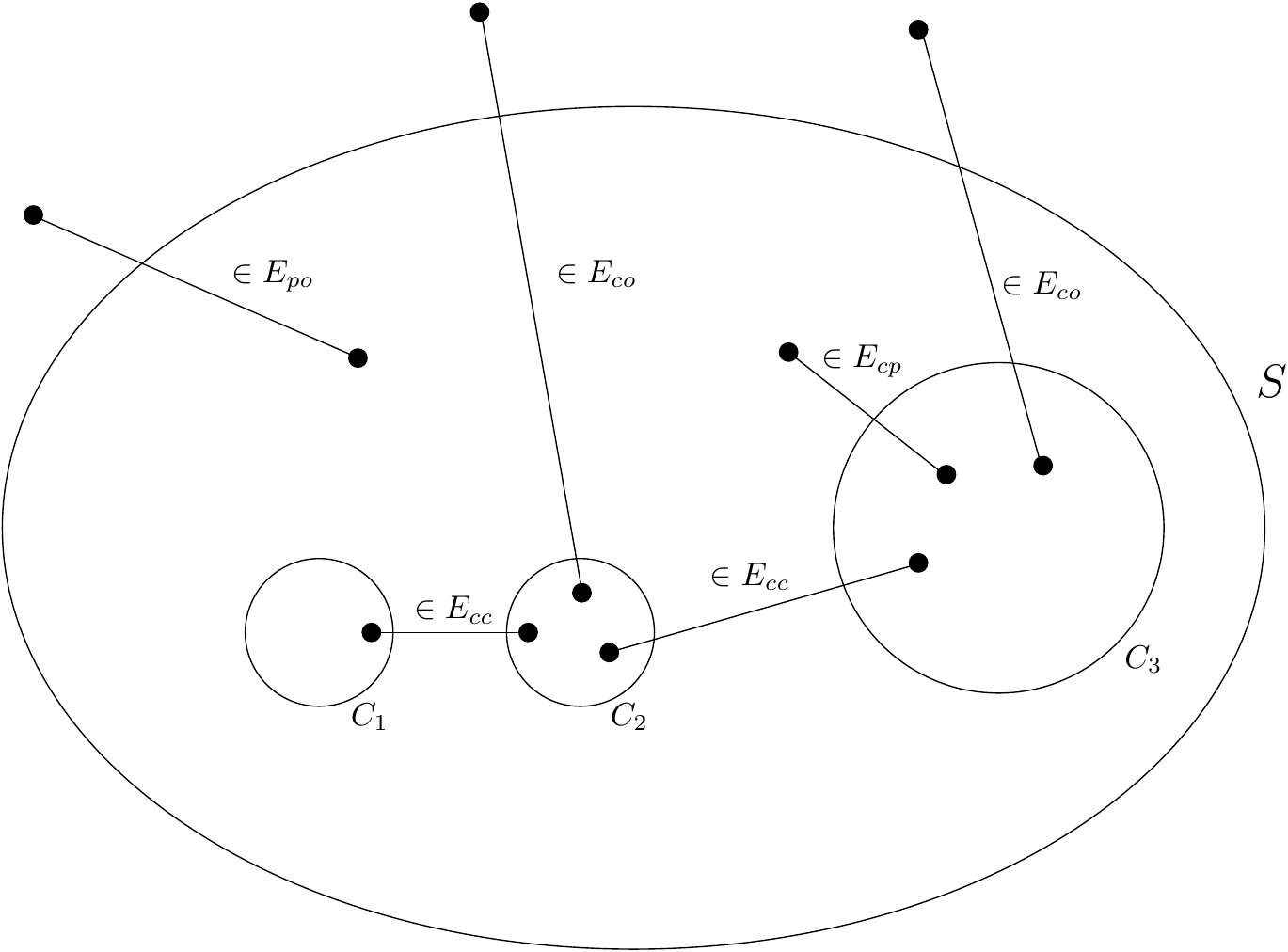}
  \caption{$S$ is an internal node with several children. Different
  types of edges that  play a role. $p$ refers to parent set $S$,
$c$ refer to a child set and $o$ refers to outside.}
  \label{fig:counting}
\end{figure}

\medskip
\noindent {\bf Case 2:} $S$ is an internal nodes with $k$ children
$C_1,C_2,\ldots,C_k$. See Fig~\ref{fig:counting} for the different
types of edges that are relevant. $E_{cc}$ is the set
of edges with end points in two different children of $S$. $E_{cp}$
be the set of edges that cross exactly one child but do not cross
$S$. $E_{po}$ be the set of edges that cross $S$ but do not cross
any of the children. $E_{co}$ is the set of edges that cross both
a child and $S$.  This notation is borrowed from \cite{WilliamsonSbook}.

Let $\gamma(S)$ be the number of edges 
whose both endpoints belong to $S$ but not to any child of $S$. 
Note that $\gamma(S) = |E_{cc}| + |E_{cp}|$.

Then,
\begin{eqnarray}
	\b(S)	&=&	\gamma(S) + \sum_{i=1}^k \b(C_i) \nonumber \\
		&\geq&	\gamma(S) + \sum_{i=1}^k \a(C_i) - \sum_{i=1}^k f(C_i) \label{eqn:induction}\\
		&=&	\gamma(S) + \a(S) - 1 - \sum_{i=1}^k f(C_i) \nonumber
\end{eqnarray}
(\ref{eqn:induction}) follows by applying the inductive hypothesis to
each child. From the preceding inequality, to prove that $\b(S) \geq
\a(S) - f(S)$ (the claim for $S$), it suffices to show the following
inequality.
\begin{eqnarray}
	\gamma(S)	&\geq&	\sum_{i=1}^k f(C_i) - f(S) + 1.
\end{eqnarray}

The right hand side of the above inequality can be written as:
\begin{equation}
  \label{eq:diff}
  \sum_{i=1}^k f(C_i) - f(S) + 1	= \sum_{e \in E_{cc}} 2x_e + \sum_{e \in E_{cp}} x_e - \sum_{e \in E_{po}} x_e + 1.
\end{equation}

We consider two subcases.

\smallskip
\noindent {\bf Case 2.1:} $\gamma(S) = 0$. This implies that $E_{cc}$
and $E_{cp}$ are empty. Since $\chi(\delta(S))$ is linearly
independent from $\chi(\delta(C_1)),\ldots,\chi(\delta(C_k))$, we must
have that $E_{po}$ is not empty and hence $\sum_{e \in E_{po}} x_e >
0$. Therefore, in this case,
$$\sum_{i=1}^k f(C_i) - f(S) + 1	= \sum_{e \in E_{cc}} 2x_e + \sum_{e \in E_{cp}} x_e - \sum_{e \in E_{po}} x_e + 1 = - \sum_{e \in E_{po}} x_e + 1 < 1.$$
Since the left hand side is an integer, it follows that
$\sum_{i=1}^k f(C_i) - f(S) + 1	\le 0 = \gamma(S)$.

\smallskip
\noindent {\bf Case 2.2:} $\gamma(S) \geq 1$.  Recall 
that $\gamma(S) = |E_{cc}| + |E_{cp}|$.
$$\sum_{i=1}^k f(C_i) - f(S) + 1 = \sum_{e \in E_{cc}} 2x_e + \sum_{e \in E_{cp}} x_e - \sum_{e \in E_{po}} x_e + 1  \le   \sum_{e \in E_{cc}} 2x_e + \sum_{e \in E_{cp}} x_e  + 1$$
By our assumption that $\bar{x}_e < \frac{1}{2}$ for each $e$, we have
$\sum_{e \in E_{cc}}2x_e < |E_{cc}|$ if $|E_{cc}| > 0$, and similarly
$\sum_{e \in E_{cp}} x_e < |E_{cp}|/2$ if $|E_{cp}| > 0$. Since
$\gamma(S) = |E_{cc}| + |E_{cp}| \ge 1$ we conclude that
$$ \sum_{e \in E_{cc}} 2x_e + \sum_{e \in E_{cp}} x_e  < \gamma(S).$$
Putting together we have 
$$\sum_{i=1}^k f(C_i) - f(S) + 1 \le  \sum_{e \in E_{cc}} 2x_e + \sum_{e \in E_{cp}} x_e  + 1  < \gamma(S) + 1 \le \gamma(S)$$
as desired.

This completes the proof of the claim.

\section{Connections between \hypersndp, \ecsndp and \elemsndp}
\label{sec:hypergraph}
Zhao, Nagamochi and Ibaraki \cite{ZhaoNI03} considered the extensions
\ecsndp to hypergraphs. In an hypergraph $G=(V,\mE)$ each edge $e \in
\mE$ is a subset of $V$. The degree $d$ of a hypergraph is $\max_{e
  \in \mE} |e|$. Graphs are hypergraphs of degree $2$.  Given a set of
hyperedges $F \subseteq \mE$ and a vertex subset $S \subset V$, we use
$\delta_F(S)$ to denote the set all of all hyperedges in $F$ that have
at least one endpoint in $S$ and at least one endpoint in $V \setminus
S$. It is well-known that $|\delta_F|: 2^V \rightarrow \mathbb{Z}_+$
is a symmetric submodular function.

\hypersndp is defined as follows. The input consists of an
edge-weighted {\em hypergraph} $G=(V,\mE)$ and integer requirements
$r(uv)$ for each vertex pair $uv$. The goal is to find a minimum-cost
hypergraph $H=(V, \mE')$ with $\mE' \subseteq \mE$ such that for all
$uv$ and all $S$ that separate $u,v$ (that is $|S \cap \{u,v\}| = 1$),
we have $|\delta_{E'}(S)| \ge r(uv)$. \hypersndp is a special case of
covering a skew-supermodular requirement function by a hypergraph.  It
is clear that \hypersndp generalizes \ecsndp. Interestingly,
\cite{ZhaoNI03} observed, via a simple reduction, that \hypersndp
generalizes \elemsndp as well.  We now describe \elemsndp formally and
briefly sketch the reduction from \cite{ZhaoNI03}, and subsequently
describe some implications of this connection.

In \elemsndp the input consists of an undirected edge-weighted graph
$G=(V, E)$ with $V$ partitioned into terminals $T$ and non-terminals
$N$. The ``elements'' are the edges and non-terimals, $N \cup E$.  For
each pair $uv$ of terminals there is an integer requirement $r(uv)$,
and the goal is to find a min-cost subgraph $H$ of $G$ such that for
each pair $uv$ of terminals there are $r(uv)$ element-disjoint paths
from $u$ to $v$ in $H$. Note that element-disjoint paths can intersect
in terminals. The notion of element-connectivity and \elemsndp have
been useful in several settings in generalizing edge-connectivity problems
while having some feastures of vertex connectivity. In particular, 
the current approximation for \vcsndp relies on \elemsndp \cite{ChuzhoyK}.

The reduction of \cite{ZhaoNI03} from \elemsndp to \hypersndp is quite
simple. It basically replaces each non-terminal $u \in N$ by a hyperedge. 
The reduction is depicted in the figure below.

\begin{figure}[htb]
  \centering
  \includegraphics[width=2in]{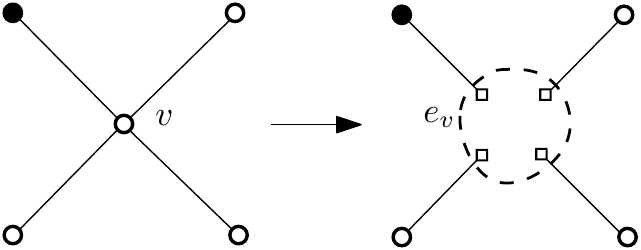}
  \caption{Reducing \elemsndp to \hypersndp. Each non-terminal $v$ is replaced
    by a hyperedge $e_v$ by introducing dummy vertices on each edge
    incident to $v$. The original edges retain their cost while the new
  hyperedges are assigned a cost of zero.}
  \label{fig:elem-to-hyper}
\end{figure}
The reduction shows that an instance of \elemsndp on $G$ can be
reduced to an instance of \hypersndp on a hypergraph $G'$ where the
only hyperedges with non-zero weights in $G'$ are the edges of the
graph $G$. This motivates the definition of $\dplus(G)$ 
which is the maximum degree of a hyperedge in $G$ that has non-zero cost. 
Thus \elemsndp reduces to instances of \hypersndp with $\dplus = 2$.
In fact we can see that the same reduction proves the following.

\begin{prop}
  Node-weighted \elemsndp in which weights are only on non-terminals
  can be reduced in an approximation preserving fashion to
  \hypersndp. In this reduction $\dplus$ of the resulting instance of
  \hypersndp is equal to $\Delta$, the maximum degree of a
  non-terminal with non-zero weight in the instance of node-weighted
  \elemsndp.
\end{prop}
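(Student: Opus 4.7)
The plan is to recycle the \elemsndp-to-\hypersndp reduction from Figure~\ref{fig:elem-to-hyper} but shift the cost onto the new hyperedges. Given a node-weighted \elemsndp instance on $G=(V,E)$ with weights $w : N \to \mathbb{R}_+$ on the non-terminals, I would form $G'=(V',\mE)$ exactly as in that figure --- for each non-terminal $v$, subdivide each edge incident to $v$ by a fresh dummy vertex, delete $v$, and add a hyperedge $e_v$ on the set of dummies introduced for $v$ --- but then assign cost $0$ to every subdivided edge and cost $w(v)$ to each hyperedge $e_v$. The terminal set and requirements $r(uv)$ are inherited. Since $|e_v| = \deg_G(v)$ and the only positive-weight hyperedges in $G'$ are the $e_v$'s, the claim $\dplus(G') = \max\{\deg_G(v) : w(v) > 0\} = \Delta$ drops out of the construction.

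For approximation preservation I would exhibit a cost-preserving correspondence between feasible solutions in both directions. The forward direction is the easy one: given a feasible $H \subseteq G$ using non-terminals $N_H$, take $H'$ to consist of the subdivided edges corresponding to the edges of $H$ together with the hyperedges $\{e_v : v \in N_H\}$, whose total cost is exactly $\sum_{v \in N_H} w(v)$. Each element-disjoint path in $H$ lifts to a hyperedge-disjoint path in $H'$ by traversing $e_{x_i}$ whenever it would traverse an internal non-terminal $x_i$, and element-disjointness in $H$ translates directly to hyperedge-disjointness in $H'$.

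The main subtlety, and the step I expect to need the most care, is the reverse direction. A feasible $H' \subseteq \mE$ need not satisfy the natural matching property that $e_v \in H'$ iff some subdivided edge of $v$ lies in $H'$. I would therefore first \emph{canonicalize} $H'$ in two moves: (i) remove any isolated $e_v$ for which no subdivided edge of $v$ is present, which can only decrease cost and cannot harm feasibility; and (ii) remove any subdivided edges of a non-terminal $v$ for which $e_v \notin H'$, which costs nothing. The content of step (ii) is the observation that without $e_v$ the only hyperedge of $H'$ incident to a dummy for $v$ is the single subdivided edge at that dummy, so the dummy is a pendant in $H'$ and cannot serve as an internal vertex of any terminal-to-terminal path; hence (by the hypergraph max-flow min-cut equivalence) deleting those edges preserves feasibility. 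After canonicalization I would define $H \subseteq G$ by including the original edge for every subdivided edge in $H'$. The non-terminals appearing in $H$ are then exactly $\{v : e_v \in H'\}$, so the costs match, and inverting the lifting from the forward direction sends any family of hyperedge-disjoint terminal paths in $H'$ to an element-disjoint family in $H$, giving feasibility of $H$.
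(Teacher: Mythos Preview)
Your proposal is correct and follows essentially the same approach as the paper, which simply asserts that ``the same reduction'' (from Figure~\ref{fig:elem-to-hyper}) proves the proposition without spelling out any details. You have supplied exactly that reduction with the costs shifted onto the hyperedges $e_v$, together with the bidirectional feasibility/cost correspondence that the paper leaves implicit; the canonicalization step in your reverse direction is a reasonable way to make the cost-matching precise.
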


\paragraph{Reducing \elemsndp to problem of covering skew-supermodular
  functions by graphs:}
We saw that an instance of \elemsndp on a graph $H$ can be reduced to
an instance of \hypersndp on a graph $G$ where $\dplus(G) = 2$.
\hypersndp on $G=(V,\mE)$ corresponds to covering a skew-supermodular
function $f:2^V \rightarrow \mathbb{Z}$ by $G$. Let $\mE = F \uplus
\mE'$ where $\mE'$ is the set of all hyperedges in $G$ with degree
more than $2$; thus $F$ is the set of all hyperedges of degree $2$ and
hence $(V,F)$ is a graph. Since each edge in $\mE'$ has zero cost we
can include all of them in our solution, and work with the residual requirement
function $g = f - |\delta_{\mE'}|$. From
Lemma~\ref{lem:skew-minus-submod} and the fact that the cut-capacity
function of a hypergraph is also symmetric and submodular, $g$ is a
skew-supermodular function.  Thus covering $f$ by a min-cost
sub-hypergraph of $G$ can be reduced to covering $g$ by a min-cost
sub-graph of $G'=(V,F)$. We have already seen a $2$-approximation 
for this in the context of \ecsndp. The only issue is whether there is an
efficient separation oracle for solving the LP for covering $g$ by
$G'$. This is a relatively easy exercise using flow arguments and we
omit them. The main point we wish to make is that this reduction
avoids working with set-pairs that are typically used for \elemsndp.
It is quite conceivable that the authors of \cite{ZhaoNI03} were aware
of this simple connection but it does not seem to have been made explicitly
in their paper or in \cite{ZhaoNI02}.

\paragraph{Approximating  \hypersndp:} \cite{ZhaoNI03} derived a $\dplus H_{\rmax}$ approximationf for \hypersndp where $H_k = 1 + 1/2+ \ldots + 1/k$ is the $k$'th
harmonic number. They obtain this bound via the augmentation framework
for network design \cite{GoemansGPSTW94} and a primal-dual algorithm
in each stage. In \cite{ZhaoNI02} they also observe that \hypersndp
can be reduced to \elemsndp via the following simple reduction. Given
a hypergraph $G=(V,\mE)$ let $H = (V \cup N, E)$ be the standard
bipartite graph representation of $G$ where for each hyperedge $e \in
\mE$ there is a node $z_e \in N$; $z_e$ is connected by edges in $H$
to each vertex $a \in e$. Let $r(uv)$ be the hyperedge connectivity
requirement between a pair of vertices $uv$ in the original instance of
\hypersndp. In $H$ we label $V$ as terminals and $N$ as non-terminals.
For any pair of vertices $uv$ with $u,v \in V$, it is not hard to verify
that the element-connectivity betwee $u$ and $v$ in $H$ is the
same as the hyperedge connectivity in $G$. See \cite{ZhaoNI02} for
details. It remains to model the costs such that an
approximation algorithm for element-connectivity in $H$ can be
translated into an approximation algorithm for hyperedge connectivity
in $G$. This is straightforward. We simply assign cost to non-terminals
in $H$; that is each node $z_e \in N$ corresponding to a hyperedge $e \in \mE$
is assigned a cost equal to $c_e$. We obtain the following easy 
corollary.

\begin{prop}
  \hypersndp can be reduced to node-weighted \elemsndp in an approximation
  preservation fashion.
\end{prop}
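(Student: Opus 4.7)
The plan is to execute the reduction already sketched in the preceding paragraph and verify its two required properties: (i) hyperedge-connectivity in $G$ is captured by element-connectivity in the bipartite representation $H$, and (ii) costs are preserved in an approximation-preserving way.

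First I would formally define the construction. Given an input hypergraph $G=(V,\mE)$ with edge-weights $c_e$ and requirements $r(uv)$ for $u,v \in V$, build the bipartite incidence graph $H=(V \cup N, E_H)$ where $N = \{z_e : e \in \mE\}$ and $E_H = \{\{a,z_e\} : e \in \mE,\ a \in e\}$. Declare $V$ to be the terminal set, $N$ the non-terminal set, put the original requirements $r(uv)$ on pairs in $V$, and in the node-weighted \elemsndp instance assign weight $c_e$ to the non-terminal $z_e$ (all edges of $H$ and all terminals carry weight zero). This produces a node-weighted \elemsndp instance in which the only nonzero weights are on non-terminals.

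Next I would prove the key connectivity identity: for every $u,v \in V$, the $u$--$v$ element-connectivity in $H$ equals the $u$--$v$ hyperedge-connectivity in $G$. The reduction is natural and I would argue both directions. Given $k$ hyperedge-disjoint paths $P_1,\dots,P_k$ in $G$ between $u$ and $v$, replace each traversal of a hyperedge $e$ by the length-two detour through $z_e$ in $H$; the resulting paths in $H$ are element-disjoint because the hyperedges used by the $P_i$ are distinct, so the non-terminals $z_e$ appearing on different $H$-paths are distinct, and no two $H$-paths share an internal terminal since that would force two $P_i$ to share a hyperedge incident to that vertex. Conversely, given $k$ element-disjoint $u$--$v$ paths in $H$, contract each non-terminal $z_e$ back to the hyperedge $e$; disjointness of the non-terminals in $H$ translates to disjointness of the corresponding hyperedges in $G$, so we recover $k$ hyperedge-disjoint $u$--$v$ paths. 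Applying Menger's theorem on both sides yields the claimed equality.

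Finally I would close the loop on costs. Any feasible solution $H' \subseteq H$ for the node-weighted \elemsndp instance uses a set of non-terminals $N' = N \cap V(H')$, and by the correspondence above the sub-hypergraph $\{e : z_e \in N'\}$ of $G$ is feasible for \hypersndp with exactly the same cost $\sum_{z_e \in N'} c_e$. The same correspondence turns a feasible hypergraph solution into a feasible $H$-solution (take all edges incident to the chosen $z_e$'s) of equal cost. Hence an $\alpha$-approximation for node-weighted \elemsndp yields an $\alpha$-approximation for \hypersndp. The only mildly delicate step is checking the Menger-based connectivity equivalence cleanly; the rest is bookkeeping. I expect no genuine obstacle here, as the construction is explicitly the incidence representation and the correspondence between hyperedge-disjoint and element-disjoint paths is the natural one.
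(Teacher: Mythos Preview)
Your approach is essentially the same as the paper's: the proposition is stated there as an ``easy corollary'' of the preceding paragraph, which describes exactly the bipartite incidence construction with node-weights $c_e$ on the non-terminals $z_e$ and defers the connectivity equivalence to the cited reference. You simply flesh out the details the paper leaves implicit.

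One small correction: in the forward direction of the connectivity argument you assert that ``no two $H$-paths share an internal terminal since that would force two $P_i$ to share a hyperedge incident to that vertex.'' This is both false and unnecessary. It is false because two hyperedge-disjoint paths in $G$ may pass through a common vertex $a$ via different hyperedges, in which case the corresponding $H$-paths share the terminal $a$. It is unnecessary because element-disjoint paths are explicitly allowed to intersect in terminals; you only need the non-terminals $z_e$ (and hence also the edges of $H$, each of which is incident to a unique $z_e$) to be pairwise distinct across the $k$ paths, which you already established. Dropping that clause leaves the argument correct.
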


\cite{ZhaoNI02} do not explicitly mention the above but note that one
can reduce \hypersndp to (edge-weighted) \elemsndp as follows. Instead
of placing a weight of $c_e$ on the node $z_e$ corresponding to the
hyperedge $e \in \mE$, they place a weight of $c_e/2$ on each edge
incident to $z_e$. This transformation loses an
approximation ratio of $\dplus(G)/2$. From this they conclude that a
$\beta$-approximation for \elemsndp implies a $\dplus \beta/2$-approximation
for \hypersndp; via the $2$-approximation for \elemsndp we obtain
a \dplus approximation for \hypersndp. One can view this as reducing
a node-weighted problem to an edge-weighted problem by transferring
the cost on the nodes to all the edges incident to the node. Since
a non-terminal can only be useful if it has at least two edges incident
to it, in this particular case, we can put a weight of half the node
on the edges incident to the node. A natural question here is whether
one can directly get a $\dplus$ approximation for \hypersndp without
the reduction to \elemsndp. We raise the following technical question.

\begin{prob}
  Suppose $f$ is a non-trivial skew-supermodular function on $V$
  and $G=(V,\mE)$ be a hypergraph. Let $\bar{x}$ be a basic feasible
  solution to the LP for covering $f$ by $G$. Is there an hyperedge
  $e \in \mE$ such that $\bar{x}_e \ge \frac{1}{d}$ where $d$ is the
  degree of $G$?
\end{prob}

The preceding propositions show that \hypersndp is essentially
equivalent to node-weighted \elemsndp where the node-weights are only
on non-terminals. Node-weighted Steiner tree can be reduced to
node-weighted \elemsndp and it is known that Set Cover reduces in an
approximation preserving fashion to node-weighted Steiner tree
\cite{KleinR95}.  Hence, unless $P = NP$, we do not expect a better
than $O(\log n)$-approximation for \hypersndp where $n = |V|$ is
the number of nodes in the graph. Thus, the approximation ratio for
\hypersndp cannot be a constant independent of \dplus. Node-weighted
\elemsndp admits an $O(\rmax \log |V|)$ approximation; see
\cite{Nutov09,ChekuriEV12a,ChekuriEV12b,Fukunaga15}. For planar graphs,
and more generally graphs from a proper minor-closed family,
an improved bound of $O(\rmax)$ is claimed in \cite{ChekuriEV12a}.  The
$O(\rmax \log |V|)$ bound can be better than the bound of $\dplus$ in some
instances. Here we raise a question based on the fact that planar
graphs have constant average degree which is used in the analyis
for node-weighted network design.

\begin{prob}
  Is there an $O(1)$-approximation for node-weighted \ecsndp and
  \elemsndp in planar graphs, in particular when $\rmax$ is a fixed
  constant?
\end{prob}

Finally, we hope that the counting argument and the connections
between \hypersndp, \ecsndp and \elemsndp will be useful for related
problems including the problems involving degree constraints in
network design.  

\bibliographystyle{alpha} \bibliography{sndp}
\end{document}